\newtheorem{Lem}{Lemma}
\newtheorem{theorem}{Theorem}
\newtheorem{Cor}{Corollary}
\def\polylog{\operatorname{polylog}}
\title{Faster Shortest Path Algorithm for $H$-Minor Free Graphs with Negative Edge Weights}
\author{Christian Wulff-Nilsen
        \footnote{School of Computer Science,
                  Carleton University,
%                  Ottawa, Ontario, Canada, K1S 5B6,
                  \texttt{koolooz@diku.dk},
                  \texttt{http://cg.scs.carleton.ca/$_{\widetilde{~}}$cwn/}.
                  Research partially supported by NSERC and MRI.}}
\date{}
\begin{document}

\maketitle
\begin{abstract}
Let $H$ be a fixed graph and let $G$ be an $H$-minor free $n$-vertex graph with integer edge weights
and no negative weight cycles reachable from a given vertex $s$. We present an algorithm that computes
a shortest path tree in $G$ rooted at $s$ in $\tilde{O}(n^{4/3}\log L)$ time, where $L$ is the absolute value
of the smallest edge weight. The previous best bound was $\tilde{O}(n^{\sqrt{11.5}-2}\log L) = O(n^{1.392}\log L)$.
Our running time matches an earlier bound for planar graphs by Henzinger et al.
\end{abstract}

\section{Introduction}
Computing shortest paths in graphs is a fundamental algorithmic problem. Two classical single source shortest path (SSSP) algorithms
are Dijkstra's algorithm and the algorithm of Bellman-Ford. For a graph with $n$ vertices and $m$ edges and no negative weight
cycles reachable from the source, the algorithm of Bellman-Ford finds a shortest path tree in $O(mn)$ time. Dijkstra's algorithm is
faster with a running time of $O(m + n\log n)$ but it only works if all edge weights are non-negative. For graphs with integer
edge weights, Goldberg's algorithm~\cite{Goldberg} solves the SSSP problem in $O(m\sqrt n\log L)$ time, where $L$
is the absolute value of the smallest edge weight.

Faster algorithms are known for special classes of graphs. For planar graphs with non-negative edge weights, Henzinger
et al.~\cite{SSSPPlanar} showed that the SSSP problem can be solved in linear time. For planar graphs with
integer edge weights (negative and non-negative), they gave an $\tilde{O}(n^{4/3}\log L)$ time algorithm. Faster
SSSP algorithms for planar graphs with negative edge weights (not necessarily integers) have since been found. Currently,
$O(n\log^2/\log\log n)$ is the best known time bound~\cite{CWN}.

The most studied graph class in modern graph theory is the class of \emph{$H$-minor free graphs}. A
graph $G'$ is a \emph{minor} of a directed or undirected graph $G$ if $G'$ can be obtained from a subgraph of the
undirected version of $G$ by contracting edges. For some fixed graph $H$, the class of $H$-minor free graphs is the
class of graphs that do not contain $H$ as a minor.

All planar graphs belong to this class as they are $K_5$-minor free (and $K_{3,3}$-minor free). All $H$-minor free graphs are
sparse, i.e., $m = O(n)$~\cite{SparseHMinor}. An $O(n)$ time SSSP algorithm was given in~\cite{LinTimeSSSPHMinor} for $H$-minor
free graphs with non-negative edge weights. When negative edge weights are allowed and when all edge weights are integers,
$O(n^{3/2}\log L)$ time is achievable using Goldberg's algorithm. This bound was recently improved to
$\tilde{O}(n^{\sqrt{11.5}-2}\log L) = O(n^{1.392}\log L)$ by Yuster~\cite{Yuster}.

Our contribution is an improvement of Yuster's time bound to $\tilde{O}(n^{4/3}\log L)$,
thereby matching the bound by Henzinger et al.\ for planar graphs. To obtain this speedup, we develop a faster algorithm to compute a
certain division of the input graph. This is plugged into Yuster's SSSP algorithm to get the improved time bound. Frederickson's
algorithm~\cite{APSPPlanar} and variants of it to find such a division for planar graphs have several applications and we believe
our algorithm may have similar applications for the class of $H$-minor free graphs.

The organization of the paper is as follows. In Section~\ref{sec:Prelim}, we give some basic definitions as well as some of
the tools that we need to obtain our result. We show how to efficiently obtain a division of the input graph in
Section~\ref{sec:Div} and we use it to get our SSSP algorithm in Section~\ref{sec:ShortestPaths}. Finally, we make some
concluding remarks in Section~\ref{sec:ConclRem}.

\section{Preliminaries}\label{sec:Prelim}
We use definitions similar to those in~\cite{Yuster}. A \emph{separation} of a graph $G$ is a pair $(A,B)$ of vertex sets
$A,B\subseteq V(G)$ with $A\cup B = V(G)$ such that no edge has one endpoint in $A\setminus B$ and one endpoint in $B\setminus A$.
We call $A\cap B$ a
\emph{separator} of $G$. Assign a non-negative weight $w(v)$ to each vertex $v\in V(G)$. For each $U\subseteq V(G)$, define
$w(U) = \sum_{v\in U} w(v)$. If $n$ is the number of vertices of $G$, we say that $G$ has an \emph{$(f(n),\alpha)$-separator} if there
is a separation $(A,B)$ with $|A\cap B|\leq f(n)$ and $w(A\setminus B),w(B\setminus A)\leq \alpha w(V(G))$. A family of graphs
closed under subgraphs satisfies an \emph{$(f(n),\alpha)$-separator theorem} if every vertex-weighted $n$-vertex graph in the
family has an \emph{$(f(n),\alpha)$-separator}. We say that the family has an \emph{$(f(n),\alpha, T(n))$-separator algorithm}
if it satisfies an $(f(n),\alpha)$-separator theorem and an $(f(n),\alpha)$-separator can be constructed in $T(n)$ time.

A classical theorem by Lipton and Tarjan~\cite{SeparatorPlanar} states that the family of planar graphs has an
$(O(\sqrt n),2/3,O(n))$-separator algorithm. Alon et al.~\cite{SeparatorHMinor} showed that the family of $H$-minor free graphs
has an $(O(\sqrt n),2/3,O(n^{3/2}))$-separator algorithm. This was generalized by Reed and Wood~\cite{FastSeparatorHMinor}. We state
their result in the following lemma as
we will use it extensively in our paper. It gives a trade-off between the size of the separator and the time to find it.
\begin{Lem}\label{Lem:SepSizeTimeTradeoff}
Let $\gamma\in[0,1/2]$ be fixed and let $H$ be a fixed graph. The family of $H$-minor free graphs has an
$(O(n^{(2 - \gamma)/3}),2/3,O(n^{1 + \gamma}))$-separator algorithm.
\end{Lem}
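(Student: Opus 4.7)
The bound interpolates between two known extremes. At $\gamma = 1/2$ it is exactly the $(O(\sqrt{n}),2/3,O(n^{3/2}))$-separator algorithm of Alon, Seymour, and Thomas for $H$-minor free graphs, so nothing has to be done there. At $\gamma = 0$ it claims a linear-time algorithm producing a balanced separator of size $O(n^{2/3})$. My plan is to establish this fast endpoint and then obtain intermediate $\gamma$ by a hybrid scheme.

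For $\gamma = 0$, I would run BFS from an arbitrary source in $O(n)$ time (possible since $H$-minor free graphs are sparse) to produce distance levels $L_0, L_1, \ldots$. An averaging argument---analogous to Lipton and Tarjan's use of BFS layers in planar graphs, but adapted via the structural theorem of Robertson and Seymour---identifies a small collection of consecutive layers whose union is a $2/3$-balanced separator of size $O(n^{2/3})$. The enlargement from $O(\sqrt{n})$ to $O(n^{2/3})$ is precisely what is traded for the linear running time.

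For $\gamma \in (0, 1/2)$, set a threshold $m := n^{2\gamma}$ and proceed in two phases. First, recursively apply the $\gamma = 0$ coarse algorithm to decompose $G$ into a hierarchical family of pieces, stopping when pieces have size at most $m$. Since each application runs in time linear in the piece size and the recursion has $O(\log n)$ depth, this phase takes $\tilde{O}(n)$ time. Next, on each resulting leaf piece of size at most $m$, run the Alon--Seymour--Thomas algorithm in $O(m^{3/2}) = O(n^{3\gamma})$ time per piece; summing over the $O(n/m) = O(n^{1 - 2\gamma})$ leaves gives total time $O(n^{1 + \gamma})$, matching the claim. Finally, assemble a single balanced separator of $G$ from the hierarchical decomposition together with the leaf-level separators; a level-by-level accounting of the recursive boundaries, using that each leaf contributes $O(n^\gamma)$ and that the coarse boundaries accumulate geometrically, yields a total separator size of $O(n^{(2-\gamma)/3})$.

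The main obstacle is the $\gamma = 0$ coarse algorithm. While the BFS-layer idea is well understood for planar graphs---where layers are cycles in a natural embedding---its extension to arbitrary $H$-minor free graphs requires leveraging the full strength of the Robertson--Seymour structure theorem, which controls how the neighborhoods and layers behave. Verifying that a union of a bounded number of layers does form a balanced $O(n^{2/3})$-separator, and that the correct layers can be found in $O(n)$ time, is the technical heart of the argument; the interpolation and the endpoint $\gamma = 1/2$ are then essentially bookkeeping on top of this coarse primitive and Alon--Seymour--Thomas.
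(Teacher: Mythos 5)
The lemma in question is not proved in the paper at all: the text explicitly introduces it with ``This was generalized by Reed and Wood~\cite{FastSeparatorHMinor}. We state their result in the following lemma,'' so it is a citation, not a claim the paper argues for. Your attempt therefore has to be assessed as a from-scratch proof of Reed and Wood's theorem, and on that footing it has two gaps.

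First, the $\gamma = 0$ endpoint (a linear-time $O(n^{2/3})$-balanced separator for $H$-minor free graphs) is asserted rather than proved, and the proposed BFS-layer route does not transfer from the planar case. In Lipton--Tarjan the middle BFS layer is not small by itself; what rescues the argument is the planar embedding, which lets one shrink the ``thick'' middle band to two thin layers plus a single fundamental cycle. $H$-minor free graphs have no analogous embedding, and the Robertson--Seymour structure theorem (clique-sums of graphs nearly embedded in bounded-genus surfaces, with apices and vortices) does not hand you a fundamental-cycle trick; turning it into an algorithm is also far from linear time. You flag this as ``the technical heart,'' which is honest, but it means the lemma is unproved precisely where it is nontrivial.

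Second, the interpolation scheme does not produce a single balanced separator of size $O(n^{(2-\gamma)/3})$. Recursing with the coarse $\gamma = 0$ routine down to pieces of size $m = n^{2\gamma}$ fixes, already at the top level, a balanced separator of $G$ of size $\Theta(n^{2/3})$; nothing you do inside the leaf pieces with Alon--Seymour--Thomas can retroactively shrink that top-level cut. A balanced $(f(n),2/3)$-separator is one separation of $G$, and it cannot be ``assembled level-by-level'' from the hierarchy's boundaries: the union of those boundaries is at least $\Omega(n^{2/3})$, and adding the $O(n/m)$ leaf separators of size $O(n^\gamma)$ each contributes another $\Theta(n^{1-\gamma})$, which for small $\gamma$ is much larger than the target $n^{(2-\gamma)/3}$. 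What the construction actually resembles is an $(r,p)$-division in the sense of Section~\ref{sec:Div} of this paper, not a separator, and even as a division the exponent arithmetic does not land on $p = O(n^{(2-\gamma)/3})$. To prove the lemma you would need Reed and Wood's actual mechanism for trading running time for separator size, which is not the coarse-then-refine decomposition you describe.
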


For a graph $G$, a \emph{region} (of $G$) is a subset of vertices of $G$ induced by a subset of edges of $G$. Partitioning
$E(G)$ into $k$ sets induces a set of $k$ (possibly overlapping) regions. Given such a set of regions, a \emph{boundary vertex}
is a vertex belonging to more than one region. Let $n$ be the number of vertices of $G$. An \emph{$(r,p)$-division} of $G$ is
a partition of $E(G)$ into $O(n/r)$ subsets such that each of the induced regions contains at most $r$ vertices and $O(p)$ boundary
vertices.

For $\gamma > 0$, Yuster~\cite{Yuster} applied Lemma~\ref{Lem:SepSizeTimeTradeoff} recursively to obtain an
$(r,r^{(2 - \gamma)/3})$-division of an
$n$-vertex $H$-minor free graph in $O(n^{1 + \gamma})$ time using ideas of Frederickson~\cite{APSPPlanar}. We will show how the
same lemma can be applied to obtain such a division in $O(nr^{\gamma}\log (n/r))$ time for all $r$ and in $O(nr^{\gamma})$ time for
$r = n^{\Omega(1)}$. This is no worse than Yuster's bound and asymptotically better when $r = o(n)$. In
Section~\ref{sec:ShortestPaths}, we apply this result to get the improvement over Yuster's shortest path algorithm.

The intuition behind our algorithm is as follows. Subgraphs at the top-levels of the
recursion are large so finding good separators for them is expensive. For these subgraphs, we therefore pick a value $\gamma'$ smaller
than $\gamma$ when applying Lemma~\ref{Lem:SepSizeTimeTradeoff}. As we move down the recursion tree, we increase $\gamma'$
such that it slowly approaches $\gamma$, thereby finding increasingly better separators. We can afford this since subgraphs
are smaller at the lower recursion levels and we prove that the chosen $\gamma'$-values suffice to
give the desired $(r,r^{(2 - \gamma)/3})$-division. Our approach differs from Yuster's which
applies Lemma~\ref{Lem:SepSizeTimeTradeoff} with fixed $\gamma' = \gamma$ through the recursion.

\section{Computing an $(r,p)$-division}\label{sec:Div}
In the following, let $p = r^{(2 - \gamma)/3}$. We will show how to find an $(r,p)$-division of an $H$-minor free
graph $G$ with $n$ vertices in $O(nr^{\gamma}\log(n/r))$ time.

We shall first divide $G$ into $\Theta(n/r)$ regions
each containing at most $r$ vertices such that the sum of boundary vertices over all regions is $O(pn/r)$ (so a boundary
vertex belonging to $k$ regions contributes with the value $k$ to the sum). We shall refer
to such a division as a \emph{weak $(r,p)$-division}. Observe that any $(r,p)$-division is a weak $(r,p)$-division.
The converse is not true as some regions in a weak $(r,p)$-division may not have an $O(p)$ bound on the number of boundary vertices.
We will later show how to efficiently convert a weak $(r,p)$-division into an $(r,p)$-division.

\subsection{Computing a weak $(r,p)$-division}\label{subsec:WeakDiv}
We now describe how to compute a weak $(r,p)$-division. As mentioned earlier, we will pick different $\gamma'$-values for
the various subgraphs obtained in the recursive subdivision. More precisely, for a subgraph of size $N > r$, we apply
Lemma~\ref{Lem:SepSizeTimeTradeoff} with the value $\gamma(N,r,p)$ satisfying
\begin{eqnarray}\label{eqn:Gamma}
  p(N/r)^{2/3} = N^{(2 - \gamma(N,r,p))/3}.
\end{eqnarray}
Clearly, $\gamma(N,r,p)$ can always be chosen so that it satisfies~(\ref{eqn:Gamma}). We will later
show that $\gamma(N,r,p)\in[0,1/2]$ for all $N > r$ so that Lemma~\ref{Lem:SepSizeTimeTradeoff} can be applied.

These choices of $\gamma(N,r,p)$-values result in a division of $G$ into regions each of size at most $r$. To bound the
sum of boundary vertices over all regions in this division, we use ideas of Frederickson (see proof of Lemma 1 in~\cite{APSPPlanar}).
For each boundary vertex $v$, let $b(v)$ be one less than the number of regions containing $v$. Let $B(n,r,p)$ be the sum of
$b(v)$ over all boundary vertices $v$. We will show that for $n > \frac 1 3 r$,
\begin{equation}
  B(n,r,p) \leq cp\frac n r - dp\left(\frac n r\right)^{2/3}\label{eqn:B}
\end{equation}
for constants $c$ and $d$. Note that since we do not partition a region of size $\leq r$, all regions have size greater
than $\frac 1 3 r$ so we only consider values $n > \frac 1 3 r$.

If we can show~(\ref{eqn:B}), this will give the desired bound on the
sum of boundary vertices over all regions and it will also imply that
there are $\Theta(n/r)$ regions in total since the sum of the number of vertices over all regions is
$n + B(n,r,p) = n + O(pn/r) = O(n)$ and each region is of size $\Theta(r)$.

To bound $B(n,r,p)$, we set up a recurrence relation. For $n \geq r$, we have, with $1/3\leq\alpha\leq 2/3$,
\begin{align*}
  B(n,r,p) & \leq c'p\left(\frac n r\right)^{2/3} + B\left(\alpha n + c'p\left(\frac n r\right)^{2/3},r,p\right)\\
           & \phantom{{} \leq} + B\left((1 - \alpha)n + c'p\left(\frac n r\right)^{2/3},r,p\right),
\end{align*}
where $c' > 0$ is a constant, and $B(n,r,p) = 0$ for $\frac 1 3 r < n < r$.

We prove~(\ref{eqn:B}) by induction on $n$. If we pick $c = \sqrt[3]3d$ % and $d$ such that $c$ is sufficiently larger than $d$
then~(\ref{eqn:B}) holds for $\frac 1 3 r < n < r$. Now, assume that $n\geq r$ and that~(\ref{eqn:B}) holds for smaller values.
Then
\begin{align*}
B(n,r,p) & \leq c'p\left(\frac n r\right)^{2/3} + cp \frac n r + 2cp\frac{c'p(n/r)^{2/3}}r\\
         & \phantom{{}\leq} - \frac{dp}{r^{2/3}}\left[\left(\alpha n + c'p\left(\frac n r\right)^{2/3}\right)^{2/3} +
                              \left((1 - \alpha) n + c'p\left(\frac n r\right)^{2/3}\right)^{2/3}\right].
\end{align*}
We will show that the right-hand side in this inequality is at most the right-hand side in~(\ref{eqn:B}). This will follow from
\begin{align*}
c'p\left(\frac n r\right)^{2/3} + 2cp\frac{c'p(n/r)^{2/3}}r\leq
\frac{dp}{r^{2/3}}\left((\alpha n)^{2/3} + ((1 - \alpha)n)^{2/3} - n^{2/3}\right).
\end{align*}
Dividing by $p(n/r)^{2/3}$ on both sides gives%and using the fact that $\frac 1 3\leq\alpha\leq\frac 2 3$ gives
\[
  c' + \frac{2cc'}{r^{(1 + \gamma)/3}} = c' + 2cc'\frac p r \leq d\left(\alpha^{2/3} + (1 - \alpha)^{2/3} - 1\right).
\]
Since $\frac 1 3\leq\alpha\leq\frac 2 3$, we have $\alpha^{2/3} + (1 - \alpha)^{2/3} - 1\geq \epsilon$ with
constant $\epsilon = (1/3)^{2/3} + (2/3)^{2/3} - 1 > 0$. Also recall that $c = \sqrt[3]3d$. The induction step will thus follow
if we can show that
\[
  \frac{c'}d + \frac{2\sqrt[3]3c'}{r^{(1 + \gamma)/3}}\leq \epsilon.
\]
We may assume that both $r$ and $d$ are bounded from below by some large constant so we can make this equation hold. This completes
the proof by induction.

We have shown that the chosen $\gamma(N,r,p)$-values give the desired weak $(r,p)$-division. However, it only works if each such
value is in the interval $[0,1/2]$ since otherwise, Lemma~\ref{Lem:SepSizeTimeTradeoff} does not apply. Since $p\geq\sqrt r$ and
$r \leq N$,
\[
  p(N/r)^{2/3} \geq N^{2/3}/r^{1/6} \geq \sqrt N,
\]
which by equation~(\ref{eqn:Gamma}) implies $\gamma(N,r,p) \leq 1/2$.
To show that $\gamma(N,r,p) \geq 0$, we again apply equation~(\ref{eqn:Gamma}) and get
\[
  \log p + \frac 2 3 \log(N/r) = \frac 1 3 (2 - \gamma(N,r,p))\log N,
\]
implying that
\[
  \gamma(N,r,p) = 2 - 3\frac{\log p + \frac 2 3\log(N/r)}{\log N}.
\]
Hence,
\begin{eqnarray}\label{eqn:GammaPos}
  \gamma(N,r,p) \geq 0\Leftrightarrow \frac 2 3\log N \leq \frac 2 3\log r - \log p.
\end{eqnarray}
Since $p \leq r^{2/3}$, the equation on the right-hand side is satisfied. We note that
\begin{eqnarray}\label{eqn:EpsilonThird}
  \gamma(N,r,p) = \frac{2\log r - 3\log p}{\log N}.
\end{eqnarray}

\subsection{Running time}\label{subsec:Runtime}
Now, let us bound the running time to find a weak $(r,p)$-division. Let $R$ be a region that is separated in some recursive step of
the algorithm. By equation~(\ref{eqn:EpsilonThird}) and Lemma~\ref{Lem:SepSizeTimeTradeoff}, the time to find this separation is
\begin{align*}
  O(|R|^{1 + \gamma(|R|,r,p)}) & = O(|R|^{1 + (2\log r - 3\log p)/\log|R|})\\
                               & = O(|R|2^{2\log r - 3\log p})\\
                               & = O(|R|r^2/p^3)\\
                               & = O(|R|r^{\gamma}).
\end{align*}
Since a weak $(r,p)$-division has $O(pn/r)$ boundary vertices over all regions and since there are $O(\log(n/r))$ recursion levels,
the total size of regions generated in all recursion levels is $O((pn/r + n)\log(n/r)) = O(n\log(n/r))$. This gives the following lemma.
\begin{Lem}\label{Lem:WeakrDiv}
Let $\gamma\in[0,1/2]$ be fixed and let $H$ be a fixed graph. For $r\leq n$, a weak $(r,r^{(2-\gamma)/3})$-division
in an $H$-minor free graph with $n$ vertices can be computed in $O(nr^{\gamma}\log(n/r))$ time.
\end{Lem}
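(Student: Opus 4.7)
The plan is to describe a recursive algorithm and then reuse the two analyses already developed in Sections~\ref{subsec:WeakDiv} and~\ref{subsec:Runtime}. Given a region $R$ of size $N > r$, the algorithm computes $\gamma(N,r,p) \in [0,1/2]$ from equation~(\ref{eqn:Gamma}), invokes Lemma~\ref{Lem:SepSizeTimeTradeoff} with this parameter to obtain a balanced separator of size $O(p(N/r)^{2/3})$, partitions $R$ along this separator, and recurses on each side until every region has at most $r$ vertices. The key design choice is that $\gamma(N,r,p)$ is small for large $N$ (so top-level separators are cheap) and increases toward $\gamma$ as $N$ shrinks toward $r$ (so low-level separators are small enough to keep the boundary under control).

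For correctness I would verify two things. First, Lemma~\ref{Lem:SepSizeTimeTradeoff} can legitimately be invoked at every recursive call: since $\sqrt{r} \leq p \leq r^{2/3}$ and $N \geq r$, the argument around equations~(\ref{eqn:GammaPos}) and~(\ref{eqn:EpsilonThird}) shows $\gamma(N,r,p) \in [0,1/2]$. Second, the total number of boundary incidences is $O(pn/r)$: this is the induction of Section~\ref{subsec:WeakDiv} with ansatz $B(n,r,p) \leq cp(n/r) - dp(n/r)^{2/3}$, where the negative second term absorbs the $O(p(N/r)^{2/3})$ separator overhead through the uniform gap $\alpha^{2/3}+(1-\alpha)^{2/3}-1 \geq \epsilon > 0$ for $\alpha \in [1/3,2/3]$, closing once $r$ and $d$ are sufficiently large. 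This also gives $\Theta(n/r)$ regions, since the total vertex count across regions is $n + O(pn/r) = O(n)$ and each region has $\Theta(r)$ vertices.

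For the running time, equation~(\ref{eqn:EpsilonThird}) gives $2\log r - 3\log p = \gamma\log r$, so a single separation of $R$ costs $O(|R|r^\gamma)$. To aggregate, I would sum level by level over the $O(\log(n/r))$ recursion levels: at each level the total size across surviving regions is at most $n$ plus cumulative boundary duplications, and since the total boundary across the whole recursion is $O(pn/r) = O(n)$, each level contributes $O(nr^\gamma)$ work, giving the claimed $O(nr^\gamma\log(n/r))$ bound.

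The main obstacle is the induction for~(\ref{eqn:B}): it must simultaneously cope with unbalanced splits $\alpha\in[1/3,2/3]$, with additive inflation of subregion sizes by the separator count, and with a varying parameter $\gamma(N,r,p)$ through the recursion. The delicate point is the constant-level inequality $c'/d + 2\sqrt[3]{3}\,c'r^{-(1+\gamma)/3} \leq \epsilon$, which is solvable because $r$ can be taken large and the separator-gap constant $\epsilon$ is bounded away from zero. Once this is in hand, the per-level running-time aggregation is routine.
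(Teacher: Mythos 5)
Your proposal follows the paper's own argument essentially verbatim: the same choice of level-dependent $\gamma(N,r,p)$ from equation~(\ref{eqn:Gamma}), the same range check via equations~(\ref{eqn:GammaPos}) and~(\ref{eqn:EpsilonThird}), the same boundary-count induction with ansatz~(\ref{eqn:B}), the same simplification $|R|^{1+\gamma(|R|,r,p)} = O(|R|r^\gamma)$ for the per-separation cost, and the same level-by-level aggregation over the $O(\log(n/r))$ recursion levels. It is correct and matches the paper's proof.
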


The running time in Lemma~\ref{Lem:WeakrDiv} can be improved slightly when $\gamma > 0$ and $r = n^{\Omega(1)}$ as we show in
the following.
We will redefine values $\gamma(N,r,p)$ so that
\begin{eqnarray}\label{eqn:Gamma2}
  p(N/r)^{2/3 + \epsilon} = N^{(2 - \gamma(N,r,p))/3},
\end{eqnarray}
where $\epsilon > 0$ is some small constant that we specify below.
We need to show that we get a weak $(r,p)$-division with these values instead of those in equation~(\ref{eqn:Gamma}).
Going through Section~\ref{subsec:WeakDiv}, we see that the $O(pn/r)$ bound on the total number of boundary vertices still
holds if $\epsilon$ is sufficiently small. We need to show that $0\leq \gamma(N,r,p) \leq 1/2$. Since
\[
  p(N/r)^{2/3 + \epsilon} \geq N^{2/3 + \epsilon}/r^{2/3 + \epsilon - 1/2} > \sqrt N,
\]
it follows from equation~(\ref{eqn:Gamma2}) that $\gamma(N,r,p) < 1/2$. From the same equation, we have
\[
  \gamma(N,r,p) \geq 0\Leftrightarrow \epsilon\log N \leq (2/3 + \epsilon)\log r - (2/3 - \gamma/3)\log r
                                                     = (\epsilon + \gamma/3)\log r.
\]
Since $r = n^{\Omega(1)}$, there is a constant $c > 0$ such that $r \geq n^c\geq N^c$ and we have
$(\epsilon + \gamma/3)c\log N\leq (\epsilon + \gamma/3)\log r$. Furthermore,
\[
  \epsilon\log N \leq (\epsilon + \gamma/3)c\log N
                   \Leftrightarrow c \geq \frac{\epsilon}{\epsilon + \gamma/3}.
\]
Since $\gamma > 0$, we can pick $\epsilon > 0$ sufficiently small to ensure that the rightmost inequality holds. Combining with the
above, we get $\gamma(N,r,p) \geq 0$, as desired.

Now, let us bound the running time of the weak $(r,p)$-division algorithm with the choices of $\gamma(N,r,p)$ in~(\ref{eqn:Gamma2}).
We may assume that $2/3 + \epsilon < 3/4$. Hence, when finding a separation of a size $N$ region, the size
of each of the two subregions is at most $\frac 2 3 N + c'p(N/r)^{2/3 + \epsilon}\leq \frac 2 3 N + c'N^{3/4}$. This is at most
$\frac 3 4 N$ if $N\geq (12c')^4$. Since we stop separating regions when they have size at most $r$ and since $r = n^{\Omega(1)}$,
we may assume that $N\geq (12c')^4$.

During the course of the algorithm, regions of various
sizes are generated in the recursion. For $i = 1,\ldots,\log_{4/3} n$, let $N_i = (4/3)^i$ and let $\mathcal P_i$ be the set of regions
of size between $N_i + 1$ and $N_{i+1}$.
For two regions $R,R'\in\mathcal P_i$, $\frac 3 4 |R| < |R'| < \frac 4 3 |R|$. By the above, neither region is obtained in any
recursive separation of the other so the only vertices they may share are boundary vertices. Since the total number of boundary vertices
generated is $O(n)$, the total size of all regions in $\mathcal P_i$ is $O(n)$.

Now, the time to find a separation of a region $R\in\mathcal P_i$ is
\begin{align*}
O(|R|^{1 + \gamma(|R|,r,p)}) & = O(|R|^{3 - 3\frac{\log p + (2/3 + \epsilon)\log(|R|/r)}{\log|R|}})\\
                              & = O(2^{3\log|R| - 3\log p - 3(2/3 + \epsilon)\log(|R|/r)})\\
                              & = O(r^{2 + 3\epsilon}|R|^{1 - 3\epsilon}/p^3)\\
                              & = O(r^{\gamma + 3\epsilon}|R|^{1 - 3\epsilon})\\
                              & = O(|R|r^{\gamma}(r/N_i)^{3\epsilon}).
\end{align*}
Over all regions of $\mathcal P_i$, this is $O(nr^{\gamma}(r/N_i)^{3\epsilon})$. Since we do not separate regions when $N_i\leq r$,
we only need to sum over those $i$ for which $N_i > r$ in order to get the total running time. This sum is a geometric series and we
get the following result.
\begin{Lem}\label{Lem:WeakrDiv2}
Let $\gamma\in(0,1/2]$ be fixed, let $H$ be a fixed graph, and let $r\leq n$. If $r = n^{\Omega(1)}$ then
a weak $(r,r^{(2-\gamma)/3})$-division in an $H$-minor free graph with $n$ vertices can be computed in $O(nr^{\gamma})$ time.
\end{Lem}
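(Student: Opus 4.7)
My plan is to modify the weak-division algorithm from Section~\ref{subsec:WeakDiv} by replacing the exponent $2/3$ in equation~(\ref{eqn:Gamma}) with $2/3 + \epsilon$ for a small constant $\epsilon > 0$ to be chosen, i.e.\ redefine $\gamma(N,r,p)$ via~(\ref{eqn:Gamma2}). The first step is to check that this perturbed choice still yields a valid weak $(r,p)$-division. Rerunning the induction bounding $B(n,r,p)$ from Section~\ref{subsec:WeakDiv}, the only change is that the recursion carries extra factors of $(n/r)^{\epsilon}$ in the separator term; provided $\epsilon$ is small enough, these are absorbed into the slack $\epsilon(\alpha^{2/3}+(1-\alpha)^{2/3}-1)$ and the induction still closes. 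I also need $\gamma(N,r,p)\in[0,1/2]$: the upper bound still follows from $p\geq\sqrt r$ as before, while the lower bound uses the hypothesis $r = n^{\Omega(1)}$ to produce a constant $c>0$ with $\log r \geq c\log N$, which via~(\ref{eqn:Gamma2}) is equivalent to $\gamma(N,r,p)\geq 0$ once $\epsilon$ is small enough compared to $\gamma$.

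The next step is to bound the running time. By Lemma~\ref{Lem:SepSizeTimeTradeoff} together with~(\ref{eqn:Gamma2}), the cost of separating a region $R$ simplifies to $O(|R|\,r^{\gamma}(r/|R|)^{3\epsilon})$. To sum these costs I would group regions by size bands: for $N_i = (4/3)^i$, let $\mathcal P_i$ be the set of regions of size in $(N_i, N_{i+1}]$. Since $2/3+\epsilon<3/4$, each child of a size $N$ region has size at most $\tfrac{2}{3}N + O(N^{3/4}) \leq \tfrac{3}{4}N$ once $N$ is above a sufficiently large constant (which is granted by $r = n^{\Omega(1)}$). Hence no two regions in the same band are in an ancestor-descendant relation, so they share only boundary vertices and the total size of $\mathcal P_i$ is $O(n)$.

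Summing the per-region bound over $\mathcal P_i$ gives $O(nr^{\gamma}(r/N_i)^{3\epsilon})$, and summing over the indices $i$ with $N_i > r$ yields a convergent geometric series whose total is $O(nr^{\gamma})$, establishing the claim.

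The main obstacle is choosing $\epsilon > 0$ small enough to satisfy three simultaneous constraints: (i) the boundary-vertex recurrence of Section~\ref{subsec:WeakDiv} still closes, (ii) $\gamma(N,r,p)\geq 0$ holds under the $r = n^{\Omega(1)}$ hypothesis, and (iii) $2/3 + \epsilon < 3/4$ so that size bands give geometric shrinkage. Each constraint gives a separate upper bound on $\epsilon$; since $\gamma > 0$, the minimum is strictly positive and a suitable $\epsilon$ exists, after which the argument above proceeds as sketched.
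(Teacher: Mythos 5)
Your proposal follows the paper's proof essentially step for step: the same perturbed exponent $2/3 + \epsilon$ in the defining equation for $\gamma(N,r,p)$, the same checks that $\gamma(N,r,p)\in[0,1/2]$ (upper bound from $p\geq\sqrt r$, lower bound from $r=n^{\Omega(1)}$ with $\epsilon$ chosen small relative to $\gamma$), the same size-band decomposition with $N_i=(4/3)^i$ and the observation that regions in the same band share only boundary vertices, and the same geometric-series summation yielding $O(nr^{\gamma})$. The argument is correct and matches the paper's.
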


From a weak $(r,p)$-division, we find an $(r,p)$-division as follows. For each region containing more than $cp$
boundary vertices for some constant $c$, apply the separator algorithm of Lemma~\ref{Lem:SepSizeTimeTradeoff} with vertex
weights distributed evenly on the boundary vertices of the region. Repeat this process until every region has at most $cp$ boundary
vertices.
\begin{Lem}\label{Lem:WeakrDivTorDiv}
Let $\gamma\in[0,1/2]$ be fixed and let $H$ be a fixed graph. For $r\leq n$, the above procedure transforms a weak
$(r,r^{(2-\gamma)/3})$-division in an $H$-minor free graph with $n$ vertices into an $(r,r^{(2-\gamma)/3})$-division in
$O(nr^{\gamma})$ time.
\end{Lem}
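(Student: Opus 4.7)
The plan is to bound the total number of split operations performed by the procedure, since each split applies Lemma~\ref{Lem:SepSizeTimeTradeoff} with parameter $\gamma$ to a region of size at most $r$, costing $O(r^{1+\gamma})$ time and producing a separator of size $O(r^{(2-\gamma)/3}) = O(p)$. Thus if I can show that only $O(n/r)$ splits occur, the total running time will be immediately $O((n/r)\cdot r^{1+\gamma}) = O(nr^{\gamma})$.

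First I would analyze the effect of a single split on a region $R$ with $b_R > cp$ boundary vertices, where $c$ is the termination constant. Placing unit weight on each of the $b_R$ boundary vertices of $R$ and invoking Lemma~\ref{Lem:SepSizeTimeTradeoff}, I obtain a separation $(A,B)$ of $R$ with $|A\cap B| = O(p)$ and with at most $(2/3)b_R$ old boundary vertices in each of $A\setminus B$ and $B\setminus A$. The two new regions $R_1 = A$ and $R_2 = B$ each have at most $r$ vertices (since $|R|\leq r$) and at most $(2/3)b_R + O(p)$ boundary vertices in the updated division, namely the old boundary vertices on their respective side together with the entire separator, which is now shared between them.

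To bound the number of splits, I would set up a potential $\Phi = \sum_R \max(0, b_R - cp/2)$ over the current regions. Initially $\Phi \leq \sum_R b_R = O(pn/r)$ by the weak $(r,p)$-division property. A short case analysis on whether $b_{R_1}$ and $b_{R_2}$ exceed $cp/2$ shows that each split decreases $\Phi$ by $\Omega(p)$: the triggering condition $b_R > cp$ combined with $b_{R_i}\leq (2/3)b_R + O(p)$ ensures that the removed contribution $b_R - cp/2$ dominates whatever surviving contribution the children leave behind, provided $c$ is chosen sufficiently large relative to the constants hidden in the $O(p)$ separator bound. Hence the procedure performs at most $O(n/r)$ splits.

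Finally I would certify that the output is an $(r,p)$-division: every region has size at most $r$ (preserved by each split), at most $cp = O(p)$ boundary vertices (by the termination condition), and the region count is the initial $\Theta(n/r)$ plus one per split, i.e.\ still $O(n/r)$. The main obstacle is the potential argument: $c$ must be set large enough that the $\Omega(p)$ drop genuinely absorbs the $O(p)$ contribution of the fresh separator, so that the bound on the number of splits is linear in $n/r$ rather than some larger function; everything else is routine bookkeeping on top of Lemma~\ref{Lem:SepSizeTimeTradeoff}.
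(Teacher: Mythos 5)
Your proposal is correct and follows essentially the same strategy as the paper's proof: bound the number of split operations by $O(n/r)$ by charging each split against $\Omega(p)$ units of the initial $O(pn/r)$ boundary-vertex budget, then multiply by the $O(r^{1+\gamma})$ cost per split. The paper organizes this charging by letting $t_i$ denote the number of initial regions with $i$ boundary vertices and arguing that a region with $i$ boundary vertices undergoes $O(i/p)$ splits (so $\sum_i O(i/p)t_i = O(n/r)$ splits in total), whereas you package the same accounting into the potential $\Phi=\sum_R\max(0,b_R-cp/2)$ with an $\Omega(p)$ drop per split. Your potential argument actually makes explicit a step the paper leaves as an assertion (that a region with $i$ boundary vertices needs only $O(i/p)$ splits even though each split may add $\Theta(p)$ fresh boundary vertices), and your case analysis — requiring $c$ large relative to the hidden separator constant so that the parent's $b_R-cp/2>cp/2$ contribution dominates the children's surviving contribution plus the $O(p)$ separator overhead — is exactly what is needed to close that gap. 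The remaining bookkeeping (size $\leq r$ preserved, final boundary count $\leq cp$ by termination, region count $=$ initial $\Theta(n/r)$ plus one per split) matches the paper's.
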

\begin{proof}
The proof is very similar to that of Frederickson for planar graphs (see Lemma 2 and its proof in~\cite{APSPPlanar}).
In the weak $(r,p)$-division, let $t_i$ be the number of regions with exactly $i$ boundary vertices. With the notation in
Section~\ref{subsec:WeakDiv} and $V_B$ denoting the set of boundary vertices over all regions in the weak $(r,p)$-division, we have
\[
  \sum_i it_i = \sum_{v\in V_B}(b(v) + 1) \leq 2B(n) = O(pn/r).
\]

In the weak $(r,p)$-division, consider a region $R$ with $i > cp$ boundary vertices. When the above procedure finds a
separation $(R_1,R_2)$ of $R$, both $R_1$ and $R_2$ contain at most a constant fraction of the boundary vertices of $R$. Hence, after
$di/(cp)$ splits of $R$ for some constant $d$, all subregions will contain at most $cp$ boundary vertices. This will result
in at most $1 + di/(cp)$ subregions and at most $c'p$ new boundary vertices per split for some constant $c'$. We may assume
that $c'\leq c$. The total number of new boundary vertices introduced by the above procedure is thus
\[
  \sum_i (c'p)(di/(cp))t_i \leq d\sum_i it_i = O(pn/r)
\]
and the number of new regions is at most
\[
  \sum_i (di/(cp))t_i = (d/(c'p))\sum_i it_i = O(n/r).
\]
Hence, the procedure generates an $(r,p)$-division.
Each separation takes $O(r^{1 + \gamma})$ time. Since the number of separations is bounded by the number of new regions, we spend
a total of $O(nr^{\gamma})$ time.
\end{proof}

\begin{Cor}\label{Cor:rDiv}
Let $\gamma\in[0,1/2]$ be fixed and let $H$ be a fixed graph. For any $r\leq n$, an $(r,r^{(2-\gamma)/3})$-division
in an $H$-minor free graph with $n$ vertices can be computed in $O(nr^{\gamma}\log(n/r))$ time. If $\gamma > 0$ and
$r = n^{\Omega(1)}$, running time is $O(nr^{\gamma})$.
\end{Cor}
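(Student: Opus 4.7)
The plan is to combine the three preceding results into a two-stage algorithm and simply add the running times. First I would compute a weak $(r, r^{(2-\gamma)/3})$-division of the input graph. For arbitrary $r \leq n$, Lemma~\ref{Lem:WeakrDiv} provides such a division in $O(nr^{\gamma}\log(n/r))$ time. In the special case where $\gamma > 0$ and $r = n^{\Omega(1)}$, I would instead invoke Lemma~\ref{Lem:WeakrDiv2}, which performs the same task in the faster $O(nr^{\gamma})$ time.

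Next I would feed the resulting weak division into the transformation procedure described immediately before Lemma~\ref{Lem:WeakrDivTorDiv}: repeatedly apply the separator algorithm of Lemma~\ref{Lem:SepSizeTimeTradeoff} to any region that still has more than $cp$ boundary vertices, using vertex weights distributed evenly over those boundary vertices. By Lemma~\ref{Lem:WeakrDivTorDiv}, this produces a genuine $(r,r^{(2-\gamma)/3})$-division and runs in $O(nr^{\gamma})$ time.

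Finally I would add the two running times. In the general case, the cost is $O(nr^{\gamma}\log(n/r)) + O(nr^{\gamma}) = O(nr^{\gamma}\log(n/r))$, and in the special case it is $O(nr^{\gamma}) + O(nr^{\gamma}) = O(nr^{\gamma})$, yielding both claims of the corollary. There is really no obstacle here: the hypotheses $\gamma\in[0,1/2]$ and $r\leq n$ (respectively $\gamma>0$, $r=n^{\Omega(1)}$) are exactly what the invoked lemmas require, so the argument is a one-line combination and the conversion step's cost is already absorbed into the weak-division cost in each regime.
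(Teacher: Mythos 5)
Your proposal is correct and matches the paper's proof exactly: the paper itself states that Corollary~\ref{Cor:rDiv} ``follows immediately from Lemmas~\ref{Lem:WeakrDiv},~\ref{Lem:WeakrDiv2}, and~\ref{Lem:WeakrDivTorDiv},'' and your write-up simply spells out that one-line combination, invoking the appropriate weak-division lemma for each regime and then absorbing the $O(nr^\gamma)$ conversion cost.
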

\begin{proof}
Follows immediately from Lemmas~\ref{Lem:WeakrDiv},~\ref{Lem:WeakrDiv2}, and~\ref{Lem:WeakrDivTorDiv}.
\end{proof}

\section{Shortest paths}\label{sec:ShortestPaths}
By applying Corollary~\ref{Cor:rDiv}, we get the following result.
\begin{theorem}
Let $H$ be a fixed graph and let $G$ be an $H$-minor free $n$-vertex graph with integer edge weights and
no negative weight cycles reachable from a given vertex $s$. Then a shortest path tree in $G$ rooted at
$s$ can be computed in $\tilde{O}(n^{4/3}\log L)$ time, where $L$ is the absolute value of the smallest
edge weight.
\end{theorem}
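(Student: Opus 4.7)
The plan is to invoke Yuster's shortest path algorithm~\cite{Yuster} essentially as a black box, substituting his internal call to a division routine with the faster algorithm from Corollary~\ref{Cor:rDiv}. Yuster's algorithm follows Goldberg's scaling paradigm and performs $O(\log L)$ scaling phases, each of which reduces to an SSSP computation on a graph whose edge weights are bounded from below by $-1$. Each such computation is carried out on top of an $(r, r^{(2-\gamma)/3})$-division of $G$, and the total time per scaling phase is the sum of three contributions: (i) constructing the division, (ii) Bellman--Ford-style shortest-path work inside each region, and (iii) an SSSP computation on a skeleton graph whose vertex set is the boundary of the division.

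In Yuster's analysis, the division cost is $O(n^{1+\gamma})$, and the three-way balance of (i)--(iii) in $r$ and $\gamma$ produces the exponent $\sqrt{11.5}-2$. Replacing (i) by the bound $O(nr^{\gamma}\log(n/r))$ from Corollary~\ref{Cor:rDiv} frees us to take $r$ larger without paying extra for the division. The plan is then to set $\gamma = 1/2$ and $r = n^{2/3}$, the same operating point used by Henzinger et al.\ in the planar case. With these choices, $p = \sqrt{r} = n^{1/3}$, the division has $O(n/r) = O(n^{1/3})$ regions of size $O(n^{2/3})$, the total number of boundary vertices is $O(pn/r) = O(n^{2/3})$, and the division itself costs $\tilde{O}(n\sqrt{r}) = \tilde{O}(n^{4/3})$. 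Applying Goldberg's $O(m\sqrt{n}\log L)$ algorithm region by region (using $m = O(r)$ from the sparsity of $H$-minor free graphs) gives contribution (ii) as $O((n/r)\cdot r^{3/2}\log L) = \tilde{O}(n^{4/3}\log L)$.

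It remains to argue that contribution (iii) also fits in $\tilde{O}(n^{4/3}\log L)$. The skeleton graph has $O(n^{2/3})$ vertices and is still $H$-minor free, since it is obtained from $G$ by edge contractions and deletions, so Yuster's per-phase SSSP procedure can be invoked recursively on it, or Goldberg's algorithm can be applied directly to the condensed graph on boundary vertices; either way the cost per phase stays within the target budget under the chosen parameters. Summing the per-phase cost over the $O(\log L)$ scaling phases then yields the claimed $\tilde{O}(n^{4/3}\log L)$ overall bound.

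The hard part will be checking that, under the replaced division cost, the parameter point $\gamma = 1/2$, $r = n^{2/3}$ really does simultaneously balance all three contributions in Yuster's recurrence, and that the recursive application on the skeleton graph does not push any term above $\tilde{O}(n^{4/3})$. Once this balance is verified, the theorem follows directly from Corollary~\ref{Cor:rDiv} plugged into Yuster's framework.
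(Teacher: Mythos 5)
Your plan is the same high-level route as the paper: take Yuster's shortest-path algorithm as a black box and replace his $(r,r^{(2-\gamma)/3})$-division routine with the faster one from Corollary~\ref{Cor:rDiv}, then choose $\gamma=1/2$ (which, via $r=n^{3/(4+\gamma)}$, gives $r=n^{2/3}$ exactly as you pick). The paper, however, does not re-open Yuster's internals: it quotes Yuster's end-to-end bound $\tilde{O}(\max\{T(n,\gamma),\, n^{(13-2\gamma)/(8+2\gamma)}\log L\})$ where $T(n,\gamma)$ is the division time, substitutes $T(n,\gamma)=O(nr^{\gamma})=O(n^{(4+4\gamma)/(4+\gamma)})$, and sets $\gamma=1/2$. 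That is the whole proof; no separate accounting of per-region work or skeleton SSSP is needed.

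Where your attempted re-derivation would get you into trouble is the claim that the skeleton graph on boundary vertices ``is still $H$-minor free, since it is obtained from $G$ by edge contractions and deletions.'' In the Henzinger--Klein--Rao--Subramanian / Yuster framework the skeleton is a \emph{dense distance graph}: within each region you place an edge between every pair of boundary vertices of that region, weighted by the in-region shortest-path distance. Such a graph contains a clique on the $\Theta(p)$ boundary vertices of a region and is not $H$-minor free, so you cannot recurse Yuster's procedure on it or appeal to minor-closedness. Yuster's analysis handles the skeleton by bounding its size ($O(p^2\cdot n/r)$ edges) and applying Goldberg-style shortest paths to it directly, which is where the second term of his $\max\{\cdot,\cdot\}$ comes from. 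Your $\log L$ bookkeeping is also slightly off: the division is built once outside the scaling loop, and the per-phase shortest-path work should be $\tilde{O}(n^{4/3})$, which after $O(\log L)$ phases gives the stated bound; as written you'd double-count a $\log L$. None of this changes the final answer, but if you try to fill in the plan as stated you will need to abandon the minor-freeness claim for the skeleton and instead quote Yuster's recurrence (or reproduce his skeleton analysis) directly, which is what the paper does.
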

\begin{proof}
Yuster~\cite{Yuster} showed that a shortest path tree in $G$ rooted at $s$ can be computed in
\[
  \tilde{O}(\max\{T(n,\gamma), n^{\frac{13 - 2\gamma}{8 + 2\gamma}}\log L\})
\]
time, where $T(n,\gamma)$ is the time to compute an $(r,r^{(2-\gamma)/3})$-division for
$r = n^{3/(4 + \gamma)}$. With $T(n,\gamma) = O(n^{1 + \gamma})$ and $\gamma = \sqrt{11.5} - 3$,
Yuster obtained a time bound of $\tilde{O}(n^{\sqrt{11.5}-2}\log L)$.

We can apply Corollary~\ref{Cor:rDiv} for $\gamma > 0$ to get
$T(n,\gamma) = O(nr^{\gamma}) = O(n^{(4 + 4\gamma)/(4 + \gamma)})$ and we can compute a shortest path tree in
\[
  \tilde{O}(\max\{n^{(4 + 4\gamma)/(4 + \gamma)}, n^{\frac{13 - 2\gamma}{8 + 2\gamma}}\log L\})
\]
time. Picking $\gamma = 1/2$ gives the desired $\tilde{O}(n^{4/3}\log L)$ running time.
\end{proof}

Yuster mentions that his algorithm also gives a better time bound when shortest path trees are to be computed for
\emph{multiple} sources. However, it is well-known that once a shortest path tree has been found from one source,
all subsequent shortest path trees can be computed in the same graph but with a so called \emph{reduced weight function} which
ensures that all edge weights are non-negative (here, we assume w.l.o.g.\ that all vertices are reachable from the first
source). For details, see e.g.~\cite{SSSPPlanarKlein}.
Since a shortest path tree can be computed in linear time in this case~\cite{LinTimeSSSPHMinor}, the total time
to find shortest path trees from $k$ sources is $O(n^{4/3}\log L\polylog n + kn)$, which is faster than Yuster's approach.

\section{Concluding remarks}\label{sec:ConclRem}
For a fixed graph $H$, we gave an $\tilde{O}(n^{4/3}\log L)$ time algorithm for computing a shortest path tree in
an $n$-vertex $H$-minor free graph with integer edge weights where $L$ is the absolute value of the smallest edge weight.
This is an improvement of a previous bound of $\tilde{O}(n^{\sqrt{11.5}-2}\log L) = O(n^{1.392}\log L)$ by Yuster
and it matches an earlier time bound for planar graphs by Henzinger et al.

Our result is obtained from a faster algorithm to compute a certain division of an $H$-minor free graph. A similar type of division
has found numerous applications for planar graph problems. We believe our algorithm may find similar uses for
problems related to $H$-minor free graphs.

The fastest known shortest path algorithm in a planar $n$-vertex graph with arbitrary real edge weights has running time
$O(n\log^2n/\log\log n)$. Can we also get $\tilde{O}(n)$ running time for $H$-minor free graphs?

\end{document}